\let\emptyset\varnothing
\renewcommand{\thesection}{\Roman{section}}
\newtheorem {defnn}{Definition}
\newtheorem {condition}{Condition}
\newenvironment{reason}{\noindent {\bf Reason:\ } }{\hfill$\Box$ }
\newtheorem {property}{Property}
\begin{document}
\setcounter{page}{1}

\renewcommand{\thesection}{\Roman{section}}


%
\title{Reachability Problem in Non-uniform Cellular Automata}

\address{maths.sumit@gmail.com}

\author{Sumit Adak\\
Department of Information Technology\\
Indian Institute of Engineering Science and Technology, Shibpur\\ Howrah-711103, India\\
maths.sumit@gmail.com
\and Sukanya Mukherjee\\
Department of Computer Science and Engineering\\ 
Institute of Engineering and Management, Kolkata\\
West Bengal 700091, India\\
sukanya.mukherjee@iemcal.com
\and Sukanta Das\\
Department of Information Technology\\
Indian Institute of Engineering Science and Technology, Shibpur\\ Howrah-711103, India\\
sukanta@it.iiests.ac.in}

\maketitle

\runninghead{S. Adak, S. Mukherjee, S. Das}{Reachability Problem in Non-uniform Cellular Automata}

\begin{abstract}
This paper deals with the CREP (Configuration REachability Problem) for non-uniform cellular automata (CAs). The cells of non-uniform CAs, we have considered here, can use different Wolfram's rules to generate their next states. We report an algorithm which decides whether or not a configuration of a given (non-uniform) cellular automaton is reachable from another configuration. A characterization tool, named {\em Reachability tree}, is used to develop theories and the decision algorithm for the CREP. Though the worst case complexity of the algorithm is exponential in time and space, but the average performance is very good.
\end{abstract}

\keywords{Non-uniform Cellular Automata (CAs), reachability tree, link, rule, rule min term (RMT).}

\section{Introduction}
\label{intro}

Cellular automata (CAs) are discrete dynamical systems which produce complex global behaviour using simple local computation~\cite{Neuma66,wolfram84b}. The Configuration REachability Problem (CREP) in CAs asks to decide whether a ({\em destination}) configuration $D$ of a given cellular automaton (CA) is reachable from another ({\em source}) configuration $S$ of the CA~\cite{rpa}. The CREP is undecidable for 1-d infinite CAs~\cite{rpa1}, so researchers considered this problem for finite CAs~\cite{rpa1,rpa}. CREP is P-complete, NP-complete and PSPACE-complete depending on the types of CAs~\cite{rpa1}. It has also been shown that CREP is NP-intermediate for the CAs with additive rules~\cite{rpa}. Wolfram's rule 90, for example, is an additive rule~\cite{martin84}, and so to decide reachability of $D$ of rule 90 CA with $n$ cells from $S$, we need superpolynomial time.

However, all the works on CREP consider the classical CAs, where the cells follow same next state function (that is, $rule$) to generate their next states. In recent time, a new class of CAs, known as non-uniform CAs, are under the focus of CAs research where the cells of a CA can follow different next state functions~\cite{ppc1,Dennunzio,sipper96evnca}. Obviously, classical CAs are proper subset of these non-uniform CAs. Primary focus of the non-uniform CA research was on the one-dimensional CAs, where the cells follow Wolfram's CA rules~\cite{ppc1}. Researchers already studied the reachability problem~\cite{rpa1,rpa} for finite classical CAs. However, for non-linear non-uniform CAs, there is no method to deal with the reachability problem. In this work, we propose a method to deal with the reachability problem for 1-d finite non-uniform CAs.

We use here a characterization tool, named {\em Reachability tree}, to discover the properties of non-uniform CAs. An algorithm to decide reachability of $D$ from $S$ of a given $n$-cell non-uniform CA is reported. The algorithm can obviously deal with classical CAs as well. Worst case time complexity of the algorithm, however, is exponential, because CREP is itself PSPACE-complete~\cite{rpa}. But, the average case time requirement of the algorithm is polynomial.

To understand average case performance, we conduct an experimentation. And through experimentation, we determine that the average case complexity of the algorithm is $O(n^3)$, where $n$ is the size of automaton. 

Hereafter, by ``CA'', we will mean ``non-uniform'' CA. We next proceed with some useful definitions about CAs.

\section{Definitions}
\label{CA}
The CAs, we consider here, consist of a finite number of cells which are organized as a 1-dimensional lattice $\mathcal{L}$. The cells can be in state 0 or state 1. A configuration or (global) state of the CA is a mapping $c$: $\mathcal{L}$ $\mapsto$ $\{0,1\}$. Let us consider that $\mathcal{C}$ is the collection of all possible configurations of an $n$-cell CA (that is $|\mathcal{C}|$=$2^n$). Then, a CA is a function $F$: $\mathcal{C}$ $\rightarrow$ $\mathcal{C}$, which satisfies the following conditions: $y=F(x)$, $x,y \in {\mathcal{C}}$, where $x=(x_i)_{0\leq i\leq {n-1}}$, $y=(y_i)_{0\leq i\leq {n-1}}$ and $y_i=f_i(x_{i-1},x_i,x_{i+1})$. The $f_i:\{0,1\}^3 \mapsto \{0, 1\}$ is a next state function for the cell $i$. In this work, we consider null boundary condition where left and right neighbors of cell 0 and cell $n-1$ are always in state 0. That is, $y_0=f_0(0,x_1,x_2)$ and $y_{n-1}=f_i(x_{n-2},x_{n-1},0)$.
\begin{table}[h]
	\begin{center}	
		\caption{Rules 9, 170, 195 and 80}	
		\label{tt}
		\begin{tabular}{cccccccccc}\hline
		Present~state &  111 & 110 & 101 & 100 & 011 &  010 &  001 &  000 & Rule \\
		(RMT)& (7) & (6) & (5) & (4) & (3) & (2) & (1) & (0) &  \\\hline
		{\rm (i)~Next~state}    &   0  &  0 &  0  &  0  &   1  &   0  &   0  &   1  &  9 \\
{\rm (ii)~Next~state}    &   1 &  0  &  1  &  0  &   1  &   0  &   1  & 0 & 170 \\
{\rm (iii)~Next~state}    &   1 &  1  &  0  &  0  &   0  &   0  &   1  & 1 & 195 \\
{\rm (iv)~Next~state}    &   0 &  1  &  0  &  1  &   0  &   0  &   0  & 0 & 80 \\\hline
\end{tabular}
	\end{center}
\end{table} 

The next state function $f_i$ can be expressed in tabular form (Table~\ref{tt}). Decimal equivalents of 8-next states are conventionally called as ``rule'' ($\mathcal{R}_i$)~\cite{Wolfr83}. We name each of the 8 combinations of $x_{i-1}$, $x_i$ and $x_{i+1}$ as Rule Min Term (RMT), which is generally presented in its decimal equivalent. The 001 of the first row of Table~\ref{tt} is the RMT 1, next state against which is 0 for rule 9, 1 for rule 170. If $r$ is an RMT of ${\mathcal R_i}$, we write ${\mathcal R_i}[r]$ to denote its next state. Hence, 9[1]=0, 170[1]=1 (see Table~\ref{tt}).

Now, we introduce a set $Z_8^i$ that contains the valid RMTs of ${\mathcal R_i}$. That is, $Z_8^i = \{k ~|$ RMT $k$ of ${\mathcal R_i}$ is valid\}. Generally, $|Z_8^i|=8$. However, only four RMTs are valid for the first and last rules of a null boundary CA, and $Z^{0}_8=\{0,1,2,3\}$ and $Z^{n-1}_8=\{0,2,4,6\}$.

Traditionally, the cells of a CA follow same rule. Such a CA is {\em uniform CA}. In a {\em non-uniform CA}, the cells may follow different rules. We, therefore, need a rule vector ${\mathcal R}=\langle {\mathcal R_0},~ {\mathcal R_1}, \cdots, {\mathcal R_i}, \cdots, {\mathcal R_{n-1}}\rangle$ to define an $n$-cell non-uniform CA, where the cell $i$ follows ${\mathcal R_i}$. The uniform CA, hence, is a special case of non-uniform CA; where ${\mathcal R_0}= {\mathcal R_1}= \cdots ={\mathcal R_i}= \cdots ={\mathcal R_{n-1}}$.
\begin{figure*}[h]
	\centering
	\includegraphics[width= 2.4in, height = 1.6in]{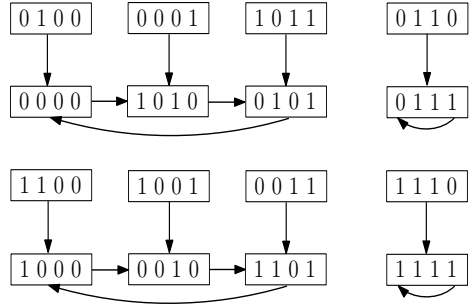}
	\caption{State transition diagram of CA $\langle 9, 170, 195, 80 \rangle$} 
	\label{Fig_STA}
\end{figure*}

{\noindent \bf State Transition Diagram:} The sequence of configurations or states of a CA generated (state transitions), during its evolution (with time), directs the CA behaviour. The state transition diagram of an automaton shows the transition of states, and depicts the relations among states of the automaton. As a proof of concept, Fig.~\ref{Fig_STA} shows the state transition diagram of a 4-cell CA $\langle$9, 170, 195, 80$\rangle$. In this work, however, we have used the terms ``configuration'' and ``state of a CA'' interchangeably.
\begin{defnn}
A state $c\in \mathcal{C}$ of a CA is {\textbf{reachable}} if there exists at least one state $x\in \mathcal{C}$ so that $c=F(x)$. If no such $x$ exists, $c$ is {\textbf{non-reachable}}.
\end{defnn}
For example, state 0011 of Fig.~\ref{Fig_STA} is non-reachable whereas state 1101 is reachable.
\begin{defnn}
A state of a CA {\textbf{$D$ is reachable from $S$}}, $D, S\in \mathcal{C}$, if there exists a finite $t\in \mathbb{N}$ so that $D=F^t(S)$. If no such $t$ exists, then \textbf{$D$ is not reachable from $S$}.
\end{defnn}
For example, state 1010 of Fig.~\ref{Fig_STA} is reachable from the state 0100. However, 0100 is not reachable from the state 1010. Please note here that ``$D$ is not reachable from $S$'' does not necessarily imply that ``$D$ is non-reachable''. $D$ may be reachable from other configuration, but not from $S$. 

{\noindent \bf RMT Sequence (RS):}
A CA state can also be viewed as a sequence of RMTs. For example, the state 0101 in null boundary condition can be viewed as $\langle 1252\rangle $, where 1, 2, 5 and 2 are the RMTs on which the transition of first, second, third and fourth cells can be made. For an $n$-bit state, we get a sequence of $n$ RMTs. Obviously, two consecutive RMTs in an RS, $r_i$ and $r_{i+1}$ are related, and $r_{i+1}$ = $2r_i$ or $2r_i+1 \pmod 8$~(Table \ref{transition}).

\begin{table}[h]
	\begin{center}	
		\caption{Relationship between $i^{th}$ and $(i+1)^{th}$ RMTs.}	
		\label{transition}
		\begin{tabular}{ccccccccc}\hline
$i^{th}$ RMT & 0 & 1 & 2 & 3 & 4 & 5 & 6 & 7 \\
$(i+1)^{th}$ RMT & 0, 1 & 2, 3 & 4, 5 & 6, 7 & 0, 1 & 2, 3 & 4, 5 & 6, 7 \\\hline
\end{tabular}
	\end{center}
\end{table} 

\begin{defnn}
Two RMTs $r$ and $s$ ($r\ne s$) are said to be equivalent to each other if $2r\equiv 2s\pmod8$.
{~\cite{SukantaTH}}
\end{defnn}

\begin{defnn}
Two RMTs $r$ and $s$ ($r\ne s$) are said to be sibling to each other if $\lfloor r/2\rfloor$=$\lfloor s/2\rfloor$.
{~\cite{SukantaTH}}
\end{defnn}
Therefore, RMT 2 is equivalent to RMT 6, whereas RMTs 2 and 3 are sibling to each other.

Now to decide whether a configuration or a state $D$ of a (non-uniform) CA reachable from another configuration $S$, we next introduce a tool, named {\em reachability tree}.

\section{Reachability Tree and Configuration Tracing}
\label{RT}

Reachability Tree~\cite{Acri04,JCA16}, a characterization tool for 1-dimensional CA, is a rooted and edge-labelled binary tree that represents the reachable states of a CA. For an $n$-cell CA, there are $n+1$ levels - root at level 0, and leaves at level $n$. We represent a node of the tree by $N_{i.j}$, where $i$ ($0\le i \le n$) is the level index, and $j$ ($0\le j \le 2^i-1$) is the node number at $i^{th}$ level. The numbering of nodes in each level starts from left side. In the reachability tree, the nodes are the subset of RMTs of rules -- $N_{i.j} \subseteq Z_8^{i}$. 

The root is formed with the RMTs of ${\mathcal R}_{0}$, the nodes of level $(n-1)$ are formed with RMTs of ${\mathcal R}_{n-1}$, and the leaf nodes are empty. We represent an edge of the tree by $E_{i.j}$, where $i$ ($0\le i \le n-1$) is the level index, and $j$ ($0\le j \le 2^{i+1}-1$) is the edge number at $i^{th}$ level. Here, we define the level of an edge. An edge is said to be edge of $i^{th}$ level, if it connects the nodes of $i^{th}$ and $(i+1)^{th}$ levels. So, we can write, $E_{i.2j} = (N_{i.j}, N_{i+1.2j}, l_{i.2j})$ and $E_{i.2j+1} = (N_{i.j}, N_{i+1.2j+1}, l_{i.2j+1})$ ($0\le i\le n-1$, $0\le j\le 2^i-1$), where $l_{i.2j}\subseteq N_{i.j}$ and $l_{i.2j+1}\subseteq N_{i.j}$ are the labels of the edges, and $l_{i.2j}\cup l_{i.2j+1} = N_{i.j}$. If $l_{i.k}=\emptyset$ for any $k$, the edge $E_{i.k}$ (hence, $N_{i+1.k}$) does not exit. We call such an edge as a non-reachable edge. However, for each $r\in l_{i.2j}$ (resp. $r\in l_{i.2j+1}$), RMT $r$ of ${\mathcal R}_{i}$ is 0 (resp. 1) and we get two RMTs $2r\pmod8$ and $2r + 1\pmod{8}$ of ${\mathcal R}_{i+1}$ in $N_{i+1.2j}$ (resp. $N_{i+1.2j+1}$), and the edge is called 0-edge (resp. 1-edge). Following is the formal definition of the reachability tree.

\begin{defnn}
\label{Def:RTNull}
Reachability tree of an $n$-cell CA with rule vector $\langle$${\mathcal R}_0$, ${\mathcal R}_1$, $\cdots$ , ${\mathcal R}_i$, $\cdots$, ${\mathcal R}_{n-1}$$\rangle$ under null boundary condition is a rooted and edge-labelled binary tree with $n+1$ levels, where $E_{i.2j} = (N_{i.j}, N_{i+1.2j}, l_{i.2j})$ and $E_{i.2j+1} = (N_{i.j}, N_{i+1.2j+1}, l_{i.2j+1})$ are the edges between nodes $N_{i.j}\subseteq Z_8^{i}$ and $N_{i+1.2j}\subseteq Z_8^{i+1}$ with label $l_{i.2j}\subseteq N_{i.j}$, and between nodes $N_{i.j}$ and $N_{i+1.2j+1}\subseteq Z_8^{i+1}$ with label $l_{i.2j+1}\subseteq N_{i.j}$ respectively $(0\le i\le n-1$, $0\le j\le 2^i-1)$. Following are the relations which exist in the tree:

\begin{enumerate}

\item \label{rootAtDefNull} {[For root]} $N_{0.0} = Z_8^0 = \{0, 1, 2, 3\}$.

\item \label{edgeAtDefNull} $\forall r\in N_{i.j}$, RMT $r$ of ${\mathcal R}_{i}$ is in $l_{i.2j}$ (resp. $l_{i.2j+1}$), if ${\mathcal R}_{i}[r]$ = 0 (resp. 1). That means, $l_{i.2j}\cup l_{i.2j+1} = N_{i.j}$ ($0\le i\le n-1$, $0\le j\le 2^i-1 $).

\item \label{nodeAtNodeNull} $\forall r \in l_{i.j}$, RMTs $2r \pmod{8}$ and $2r+1 \pmod{8}$ of ${\mathcal R}_{i+1}$ are in $N_{i+1.j}$ ($0\le i\le n-3$, $0\le j\le 2^{i+1}-1$).

\item \label{n-1AtNodeNull} {[For level $n-1$]} $\forall r \in l_{n-2.j}$, RMT $2r \pmod{8}$ of ${\mathcal R}_{n-1}$ is in $N_{i+1.j}$ ($0\le j\le 2^{n-1}-1$).

\item \label{nAtNodeNull} {[For level $n$]} $N_{n.j}=\emptyset$, for any $j$, $0\le j\le 2^n-1$.

\end{enumerate}
\end{defnn}


\begin{figure*}[h]
	\centering
	\includegraphics[width= 5.0in, height = 3.2in]{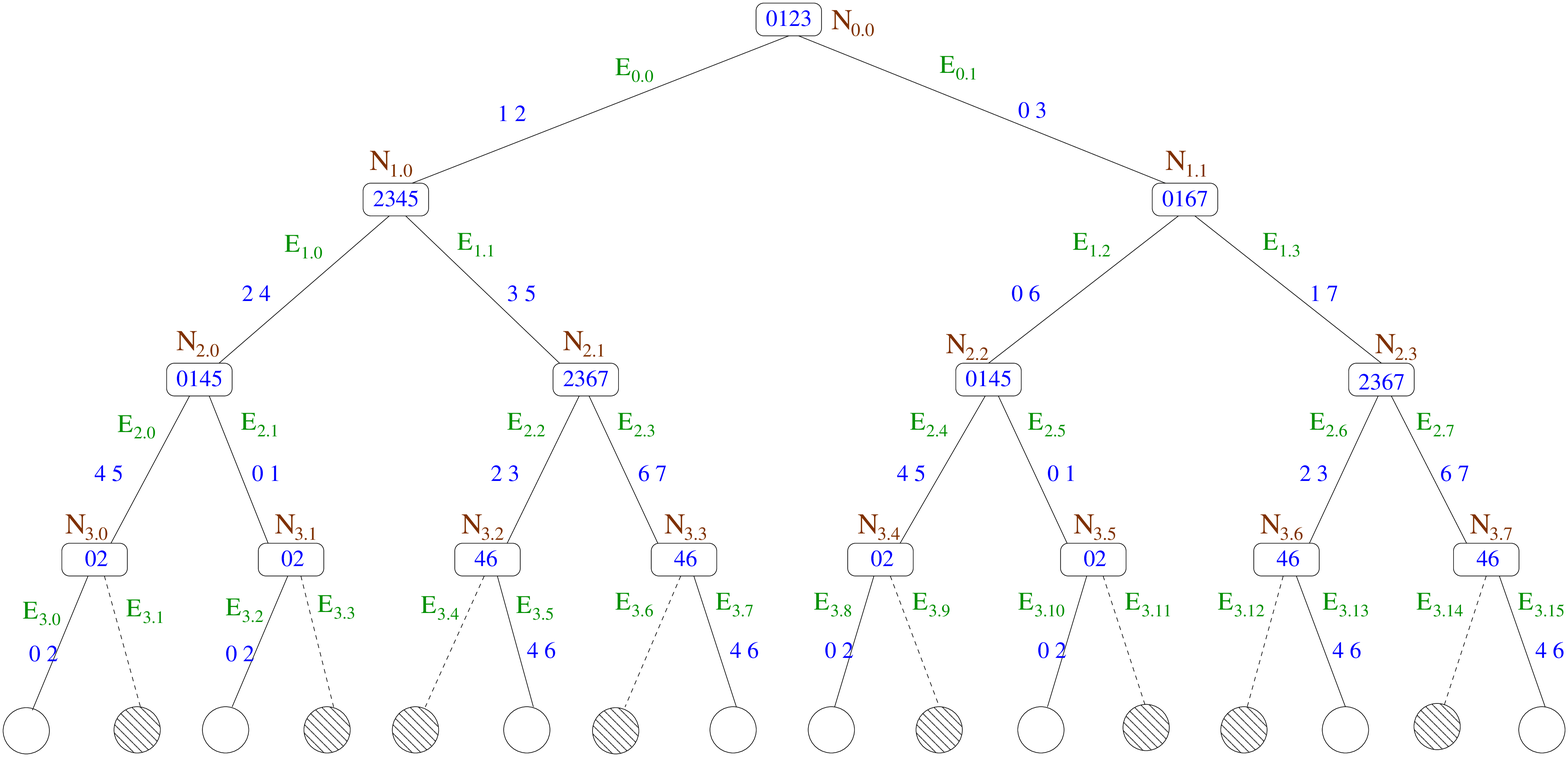}
	\caption{Reachability tree of CA $\langle 9, 170, 195, 80 \rangle$} 
	\label{rtree}
\end{figure*}
{Fig.~\ref{rtree}} is the reachability tree of the CA of Fig.~\ref{Fig_STA}. According to the null boundary condition, only 4 RMTs (0, 1, 2 and 3) of ${\mathcal{R}}_0$ are valid, and so the root is formed with these 4 RMTs. That is, $N_{0.0}=Z_8^0=\{0, 1, 2, 3\}$. Similarly, $Z_8^{n-1}=Z_8^3=\{0, 2, 4, 6\}$ and $N_{3.j}\subseteq Z_8^3$ for all $j$, $0\leq j\leq 3$. However, the label of edge $E_{0.1}$ is \{0, 3\}, as RMTs 0 and 3 of rule 9 are 1. We write RMTs of a label on the edge. Note that, the label of $E_{3.1}$ is empty, that is, $l_{3.1}=\emptyset$. This edge is {\em non-reachable}, and it can not connect any node of next level. Fig.~\ref{rtree} uses dotted line for them. Since $Z_8^{n}=\emptyset$ for an $n$-cell CA, the leaves are empty. The number of leaves (excluding dotted leaves) in Fig.~\ref{rtree} is 8, which is the number of reachable states. We call edge $E_{i.j}$ as {\em 0-edge} when $j$ is even, and {\em 1-edge} otherwise. We further call the edge $E_{i.j}$ as an edge of level $i$. A sequence of edges from the root to a leaf node represents a reachable state, when 0-edge and 1-edge are replaced by 0 and 1 respectively. For example, 0000 is a reachable state in {Fig.~\ref{rtree}}, but the state 0001 is non-reachable.

From the reachability tree, we can get the information about reachable and non-reachable states. A sequence of edges $\langle$$E_{0.j_0}$ $E_{1.j_1}$ $\cdots$ $E_{i.j_i}$ $E_{i+1.j_{i+1}}$ $\cdots$ $E_{n-1.j_{n-1}}$$\rangle$ from root to a leaf associates a reachable state and at least one RS $\langle r_0r_1\cdots r_ir_{i+1}\cdots r_{n-1}\rangle$, where $r_i\in l_{i.j_i}$ and $r_{i+1}\in l_{i+1.j_{i+1}}$ ($0\le i<n-1$, $0\le j_i\le 2^i-1$, and $j_{i+1}=2j_i$ or $2j_i+1$). That is, the sequence of edges represents at least two CA states. Note that if RMT $r_i$ is 0 (resp. 1) then $E_{i.j_i}$ is 0-edge (resp. 1-edge). Therefore, the reachable state is the next (resp. present) state of the state (resp. predecessor), represented as RS. Interestingly, there are $2^n$ RSs in the tree, but number of reachable states may be less than $2^n$. A sequence of edges may associate $m$-number of RSs ($m\ge 1$), which implies, this state is reachable from $m$-number of different states.

Obtaining only reachable or non-reachable states using reachability tree is not enough to make the decision about reachability of one state from another. We need to find out the predecessor(s) of each state in the reachability tree. Then only we can trace in the tree if a CA state $D$ is reachable from another state $S$. However, the tree guides us to find the predecessors of the CA states by establishing relation among edges. To find the relations among the edges, we introduce the concept of ``link'' in the next section. 

\subsection{Links}
\label{link}
As we have discussed before, a CA state/configuration can be represented as a bit sequence, and as an RMT sequence. Reachability tree uses both the representations - bit sequences to represent the reachable states, and RMT sequences to represent their predecessors. Now, the predecessors, which are also CA states, can be observed in the tree as bit sequence. Intuitively, the ``links'' link the states represented as bit sequences to their predecessor.

The links are formed for each RMT $r\in l_{i.j}$, present on edge $E_{i.j}$ ($0\le i\le n-1$, $0\le j\le 2^{i+1}-1$). By the processing of reachability tree, we find the links among the edges for each individual RMT on the tree. The links are formed depending on whether the RMTs are self replicating (defined below) or not.

\begin{sidewaysfigure}
	\centering
    \includegraphics[scale=.45]{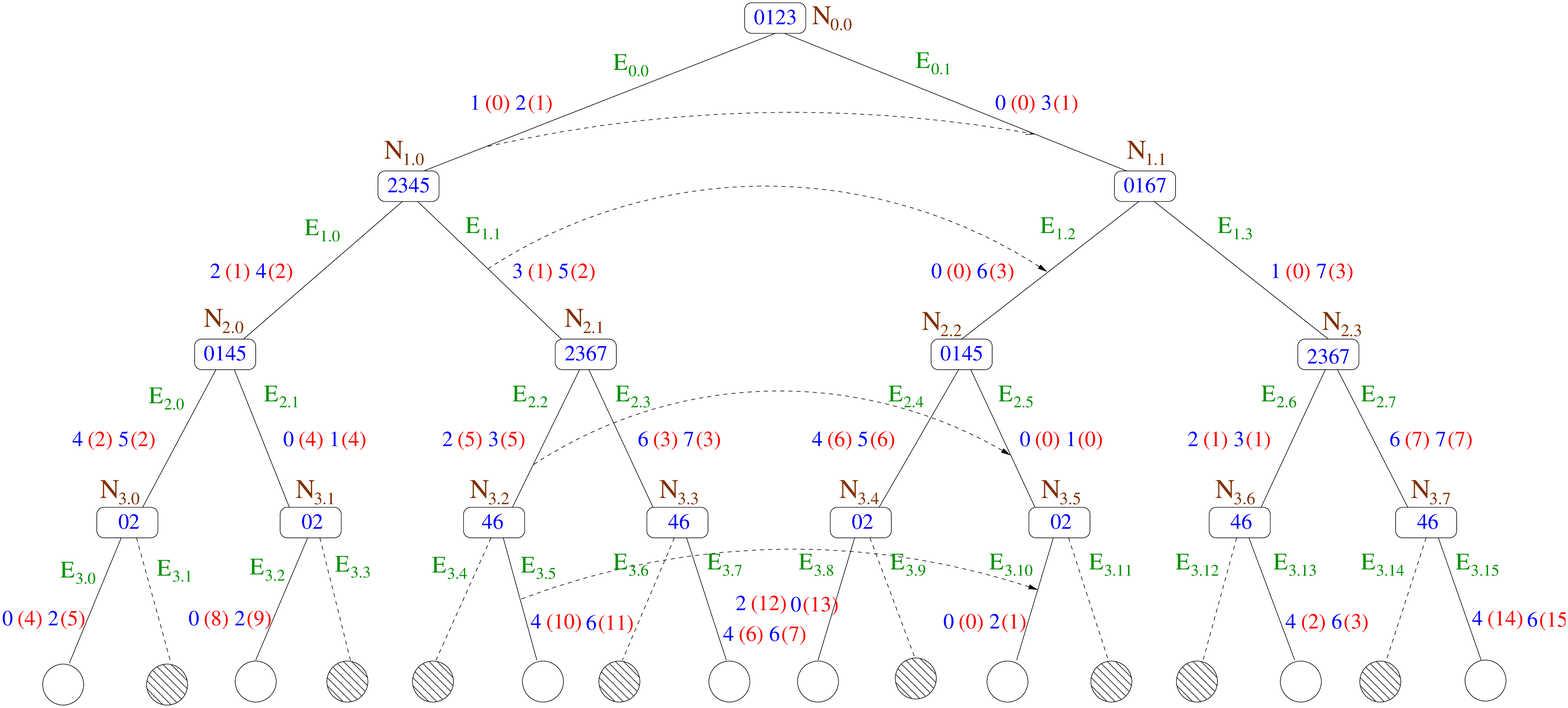}
   \caption{Links among edges of reachability tree for CA $\langle 9, 170, 195, 80 \rangle$} 
    \label{Fig_LT}
\end{sidewaysfigure}

%

\begin{defnn}
\label{stbst}
An RMT $r=4x+2y+z$ of a rule ${\mathcal R_i}$ is said to be self replicating if ${\mathcal R_i}[r]=y$ where $x,y,z\in\{0,1\}$.
\end{defnn}

For example, RMT 1 (001) and RMT 3 (011) of rule 9 is self replicating, whereas RMTs 4, 5, 6 and 7 of rule 195 are self replicating (see Table \ref{tt}). If an RMT $r\in l_{i.j}$ is not self replicating, then there is a link from the edge $E_{i.j}$ to $E_{i.k}$ ($j\ne k$). Depending on the values of $j$ and $k$, we can classify the links in the following way: forward link (when $j<k$), backward link (when $j>k$) and self link (when $j=k$). We represent this link as $E_{i.j}(r)\longrightarrow E_{i.k}$. The rules, followed to form links in a reachability tree, are noted below:\\

\noindent {\bf R1)} If RMT $r\in l_{0.j}$ is self replicating ($j= 0$ or $1$), the edge $E_{0.j}$ is self linked for RMT $r$. Otherwise, if $j=0$, there is a forward link from $E_{0.0}$ to $E_{0.1}$ for RMT $r$; else, there is a backward link from $E_{0.1}$ to $E_{0.0}$ for RMT $r$.\\

\noindent {\bf R2)} If $E_{i-1.j}$ is self linked for RMT $r\in l_{i-1.j}$, and if $s$ is self replicating where $s\in l_{i.2j}$ (resp. $s\in l_{i.2j+1}$) is $2r$ or $2r+1\pmod8$, then $E_{i.2j}$ (resp. $E_{i.2j+1}$) is self linked. But if $s$ is not self replicating, then there is a forward link from $E_{i.2j}$ to $E_{i.2j+1}$ (resp. backward link from $E_{i.2j+1}$ to 
$E_{i.2j}$).\\

\noindent {\bf R3)} If there is a link from $E_{i-1.j}$ to $E_{i-1.k}$ ($j\ne k$) for RMT $r\in l_{i-1.j}$, and $s\in l_{i.2j}$ (resp. $s\in l_{i.2j+1}$) is $2r$ or $2r+1\pmod8$, then there is a link from $E_{i.2j}$ (resp. $E_{i.2j+1}$) to $E_{i.2k}$ while $s\in \{0, 1, 4, 5\}$ or to $E_{i.2k+1}$ while $s\in \{2, 3, 6, 7\}$. It is forward link if $j<k$, backward link if $j>k$.

\begin{example}
Fig.~\ref{Fig_LT} shows the links of edges caused by RMTs of the CA $\langle$9, 170, 195, 80$\rangle$. There is a (forward) link from $E_{0.0}$ to $E_{0.1}$ for RMT 2, so we write the link within a bracket beside the RMT 2. Now, we get a forward link from $E_{1.1}$ to $E_{1.2}$ for RMT 5. Now, we get $E_{2.2}(2)$ $\rightarrow$ $E_{2.5}$, and $E_{3.5}(4)$ $\rightarrow$ $E_{3.10}$. Therefore, for the RS $\langle 2524\rangle$, we can get a sequence of links, hence a sequence of edges $\langle E_{0.1}E_{1.2}E_{2.5}E_{3.10}\rangle$, which represents 1010. Note that the RS $\langle 2524\rangle$ corresponds to the state 1010. The sequence $\langle E_{0.0}E_{1.1}E_{2.2}E_{3.5}\rangle$ associates the state 0101, as well as the RS $\langle 2524\rangle$. The RS $\langle 2524\rangle$, hence the state 1010, is the predecessor of the state 0101. See Fig.~\ref{Fig_STA} for verification.
\end{example}

The links help us to trace state transitions in reachability tree by identifying the predecessor(s) of each state. Through the links, we can identify the predecessor of predecessor of a state. If $E_{i.j}$ is linked with $E_{i.k}$ for RMT $r_1\in l_{i.j}$, and $E_{i.k}$ is linked with $E_{i.p}$ ($0\le j < k < p\le 2^i-1$ for forward link, $2^i-1\geq j > k > p \geq 0$ for backward link) for RMT $r_2\in l_{i.k}$, we say
there exists a link (forward or backward) from $E_{i.j}$ to $E_{i.p}$, where $1 \le i\le n-1$. Therefore, we get the following property (transitivity property) of the links. We write $E_{i.j}(r_1 )\rightarrow E_{i.k}$, if there is a link from $E_{i.j}$ to $E_{i.k}$ for RMT $r_1\in l_{i.j}$.
\begin{itemize}
\item If $E_{i.j}(r_1)\rightarrow E_{i.k}$ and $E_{i.k}(r_2)\rightarrow E_{i.p}$, then
\item $E_{i.j}(r_1)\rightarrow E_{i.k}(r_2)\rightarrow E_{i.p}$.
\end{itemize}
Now, we define {\em length} of the links. If from edge $E_{i.j_1}$ to $E_{i.j_2}$, there are $k$ number of RMTs (or $k$ number of edges), then we write: $length(E_{i.j_1}, E_{i.j_2})=k$. We write, $length(E_{i.j_1}, E_{i.j_2}) = \infty$ if there is no link between $E_{i.j_1}$ and $E_{i.j_2}$. In Fig.~\ref{Fig_LT}, following connection between $E_{1.0}$ and $E_{1.3}$ exists: $E_{1.0}(4) \rightarrow E_{1.2}(6) \rightarrow E_{1.3}$. That is, $length(E_{1.0}, E_{1.3})=2$.

\begin{lemma}
\label{lem1}
There exist only two links to $E_{i.j}$ from any one or two edges for RMTs $r$ and $s$ when $0 \le i < n-1$ and $r$ and $s$ are sibling to each other, and only one link when $i=n-1$ in a reachability tree ($0\le j\le 2^{i+1}-1$).~\cite{JCA16}
\end{lemma}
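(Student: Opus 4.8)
The plan is to prove the statement by induction on the level $i$, in the following form: for every existing edge $E_{i.j}$ with $0\le i\le n-1$, the number of links having $E_{i.j}$ as head is exactly two when $i<n-1$ and exactly one when $i=n-1$, and in the former case the two RMTs labelling these links are sibling to each other. The argument rests on two elementary observations. First, writing an RMT as $r=4a+2b+c$ with $a,b,c\in\{0,1\}$, one has $2r\bmod8=4b+2c$ and $2r+1\bmod8=4b+2c+1$, so both of these RMTs of $\mathcal R_{i+1}$ have middle bit equal to $c$, the low bit of $r$, and they are sibling to each other. Second, by Definition~\ref{stbst} an RMT $s$ with $\mathcal R_i[s]=0$ is self replicating exactly when its middle bit is $0$, i.e.\ when $s\in\{0,1,4,5\}$, and dually an $s$ with $\mathcal R_i[s]=1$ is self replicating exactly when $s\in\{2,3,6,7\}$; checking R2 against R3 then shows that R2 is precisely the $k=j$ instance of R3 once a self-link at $E_{i.2j}$ (resp.\ $E_{i.2j+1}$) is read as the head ``$E_{i.2k}$ (resp.\ $E_{i.2k+1}$) with $k=j$''. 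Consequently, for $i\ge1$ every link with head $E_{i.j}$ is obtained, via R2/R3, from a level-$(i-1)$ link $E_{i-1.m}(r)\to E_{i-1.k}$ by the recipe: for each of the one or two RMTs $s\in\{2r\bmod8,\,2r+1\bmod8\}$ of $\mathcal R_i$ placed in the children of $N_{i.m}$, there is a link from whichever of $E_{i.2m},E_{i.2m+1}$ carries $s$ to $E_{i.2k}$ if $\mathrm{mid}(s)=0$ and to $E_{i.2k+1}$ if $\mathrm{mid}(s)=1$. Since a head of index $2k$ or $2k+1$ forces $k=\lfloor j/2\rfloor$, links into $E_{i.j}$ can only descend from links into the parent edge $E_{i-1.\lfloor j/2\rfloor}$.

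For the base case $i=0$ I would argue directly from R1. Here $N_{0.0}=Z_8^0=\{0,1,2,3\}$, the union of the two sibling pairs $\{0,1\}$ and $\{2,3\}$. An RMT $r\in\{0,1,2,3\}$ yields a link with head $E_{0.0}$ in exactly two mutually exclusive situations: $\mathcal R_0[r]=0$ (so $r\in l_{0.0}$) with $r$ self replicating, giving a self-link at $E_{0.0}$; or $\mathcal R_0[r]=1$ (so $r\in l_{0.1}$) with $r$ not self replicating, giving a backward link $E_{0.1}\to E_{0.0}$. By the self-replication criterion above both situations reduce to the single condition $\mathrm{mid}(r)=0$, i.e.\ $r\in\{0,1\}$. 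Thus exactly the sibling pair $\{0,1\}$ points to $E_{0.0}$, and symmetrically exactly $\{2,3\}$ points to $E_{0.1}$; the two source edges coincide or not according to whether the pair lies entirely in one of $l_{0.0},l_{0.1}$ or is split between them, matching the phrase ``from any one or two edges''.

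For the inductive step fix $E_{i.j}$ with $1\le i\le n-1$ and put $m=\lfloor j/2\rfloor$. By the reduction above, every link into $E_{i.j}$ descends from a level-$(i-1)$ link into $E_{i-1.m}$ whose RMT $r$ satisfies $\mathrm{mid}(2r\bmod8)=j\bmod2$, i.e.\ the low bit of $r$ equals $j\bmod2$. The induction hypothesis applies to $E_{i-1.m}$ because $i-1\le n-2<n-1$, so there are exactly two links into $E_{i-1.m}$, labelled by a sibling pair $\{r_1,r_2\}=\{2q,2q+1\}$; one of $r_1,r_2$ is even and one is odd, hence exactly one of them has low bit $j\bmod2$ and is eligible. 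That single eligible level-$(i-1)$ link then produces the links into $E_{i.j}$: when $i<n-1$ both $2r\bmod8$ and $2r+1\bmod8$ lie in the child node (Definition~\ref{Def:RTNull}), giving exactly two links into $E_{i.j}$ whose labels $2r\bmod8,2r+1\bmod8$ are sibling to each other; when $i=n-1$ only $s=2r\bmod8$ is retained in the child node (Definition~\ref{Def:RTNull}, item~\ref{n-1AtNodeNull}), giving exactly one link. As in the base case the two resulting links share a source edge or not depending on whether these two sibling RMTs fall in the same child label of $N_{i-1.m}$ or are split, again consistent with ``one or two edges''. No double counting occurs because the links produced from $r_1$ carry labels in $\{4q,4q+1\}\bmod8$ and those from $r_2$ carry labels in $\{4q+2,4q+3\}\bmod8$, which are disjoint. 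This closes the induction and proves the lemma.

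The step I expect to be the main obstacle is the uniform bookkeeping of the bit arithmetic: for a link carrying RMT $r$ one must simultaneously track which child label ($l_{i.2m}$ versus $l_{i.2m+1}$) each of $2r\bmod8$ and $2r+1\bmod8$ lands in, what their middle bits are, and hence which of $E_{i.2k},E_{i.2k+1}$ becomes the head; and one must verify carefully that R2 really is the $k=j$ specialisation of R3 so that the two propagation rules can be handled by one analysis. The null-boundary behaviour at level $n-1$, where only one of the two child RMTs survives, is exactly what turns ``two'' into ``one'' and has to be threaded through the same argument.
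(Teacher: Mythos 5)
The paper itself offers no proof of Lemma~\ref{lem1}: it is imported verbatim from~\cite{JCA16}, so there is nothing internal to compare your argument against. Taken on its own terms, your induction is sound and supplies exactly the missing justification. The two observations you isolate are the right ones and both check out: $2r\bmod 8$ and $2r+1\bmod 8$ form a sibling pair whose common middle bit is the low bit of $r$, and the self-replication criterion of Definition~\ref{stbst} reduces, on each of $l_{i.2j}$ and $l_{i.2j+1}$, to a condition on the middle bit alone, which is precisely what makes \textbf{R2} the $k=j$ instance of \textbf{R3}. From there the head of any level-$i$ link determines the head of its level-$(i-1)$ ancestor as the parent edge, the parity of $j$ selects exactly one of the two sibling-labelled ancestor links, and item~\ref{n-1AtNodeNull} of Definition~\ref{Def:RTNull} is what drops the count from two to one at the leaf level. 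The base case computation at level $0$ is also correct. Two small points you may wish to make explicit: first, since nodes and labels are \emph{sets} of RMTs, an RMT $s\in l_{i.2u}$ can descend from both $r$ and $r+4$ in $l_{i-1.u}$ and thus carry more than one outgoing link; your head-by-head bookkeeping already absorbs this (the extra links point at a different parent's children), but a sentence acknowledging it would forestall the objection. Second, the lemma should be read as quantifying over edges that actually exist ($l_{i.j}\neq\emptyset$), as you assume at the outset.
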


\begin{property}
\label{pr1}
A link present at $i^{th}$ level triggers two links at level $i+1$, where $0\le i\le n-3$, a link of $(n-2)^{th}$ level derives one link at $(n-1)^{th}$ level.
\end{property}

This is obvious, because an RMT $r$ at a node/label of level $i$ contributes two RMTs - $2r\pmod 8$ and $2r+1\pmod 8$ in node/label(s) of level $i+1$. Both the RMTs participate in links, depending upon the link caused by RMT $r$. For example, the link $E_{0.1}(0)\rightarrow E_{0.0}$ triggers two links $E_{1.2}(0)\rightarrow E_{1.0}$ and $E_{1.3}(1)\rightarrow E_{1.0}$. However, a link at level $n-2$ triggers only one link at last level, as RMT $2r+1 \pmod 8$ is invalid in that level.

Let us now define {\em path} between two edges of a level - $E_{i.j_1}$ and $E_{i.j_k}$. We say that there exists a path between $E_{i.j_1}$ and $E_{i.j_k}$ if $E_{i.j_1}$ is linked to $E_{i.j_k}$, that is, if $length(E_{i.j_1},E_{i.j_k})$ is finite. Otherwise, there is no path between $E_{i.j_1}$ and $E_{i.j_k}$. If a path exists, we write it as the following: $E_{i.j_1}(r_1)$ $\rightarrow$ $E_{i.j_2}(r_2)$ $\rightarrow$ $\cdots$ $\rightarrow$ $E_{i.j_k}$. Now, the question is, can we say that there exist a path between $E_{i+1.p}$ and $E_{i+1.q}$ where $p \in \{2j_1, 2j_1+1\}$ and $q \in \{2j_k, 2j_k+1\}$? No, not always. Following Property~\ref{pr1}, if a path is formed from $E_{i+1.p}$ and $E_{i+1.q}$ due to the path between $E_{i.j_1}$ and $E_{i.j_k}$, we say the path between $E_{i+1.p}$ and $E_{i+1.q}$ is triggered by the path between $E_{i.j_1}$ and $E_{i.j_k}$. However, no path may be triggered at level $i+1$. Obviously, a path from $E_{n-1.j_1}$ to $E_{n-1.j_k}$ is triggered by the paths above.

\begin{example}
In Fig.~\ref{Fig_LT}, following path is formed at level $0$, which and triggers a path at level $3$: $E_{0.1}(0) \rightarrow E_{0.0}$, $E_{1.2}(0) \rightarrow E_{1.0}$, $E_{2.5}(0) \rightarrow E_{2.0}$, $E_{3.10}(0) \rightarrow E_{3.0}$.
\end{example}

Now, we explore the reachability tree to check that if there exists any path or not from destination edge ($D$) to source edge ($S$) at leaf level.
\section{Reachability Analysis}
\label{RoN}

To check whether a configuration $D$ of an $n$-cell CA is reachable from another configuration $S$, we rewrite the configurations as following: $S=(s_i)_{0\leq i\leq n-1}$ and $D=(d_i)_{0\leq i\leq n-1}$. The configurations can also be identified in the reachability tree as sequences of edges. For ease of understanding, let us rename the sequences of edges as $(s_i)_{0\leq i\leq n-1}$ representing $S$, and as $(d_i)_{0\leq i\leq n-1}$ representing D. Now, we search in the reachability tree for a path from $d_i$ to $s_i$. If no path exists, we declare that $D$ is not reachable from $S$.

\begin{theorem}
\label{th1}
For an $n$-cell CA, $D$ is reachable from $S$, if and only if there exists a path from $d_{n-1}$ to $s_{n-1}$.
\end{theorem}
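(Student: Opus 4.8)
The plan is to reduce Theorem~\ref{th1} to a single-step statement about leaf-level links and then iterate it using the transitivity of links. The single-step statement I would isolate as a \emph{Claim}: if $y$ is a reachable configuration with leaf edge $E_{n-1.j}$, then (i) for every reachable configuration $x$, one has $F(x)=y$ if and only if there is a link $E_{n-1.j}\longrightarrow E_{n-1.k}$ whose target $E_{n-1.k}$ is the leaf edge of the root-to-leaf path of $x$; and (ii) conversely, whenever $E_{n-1.j}\longrightarrow E_{n-1.k}$ is a link, $E_{n-1.k}$ is the leaf edge of the path of some reachable configuration. Here I use that the index of a level-$(n-1)$ edge is exactly the $n$-bit configuration spelled out by the $0/1$ edge labels on the branch above it, so ``leaf edge of a configuration'' is unambiguous, and that a leaf edge exists as an edge of the tree precisely when that configuration is reachable.

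To prove the Claim I would work level by level. For the ``only if'' part of (i): given $F(x)=y$, take the unique RMT sequence $\langle r_0r_1\cdots r_{n-1}\rangle$ sitting on the path of $y$ that records $x$, that is, $r_i=4x_{i-1}+2x_i+x_{i+1}$, and show by induction on $i$, using only the link-forming rules R1, R2, R3, that the links triggered by $r_0,\ldots,r_i$ carry the $i$-th edge of the path of $y$ onto the $i$-th edge of the path of $x$. The inductive step is a case analysis on whether $r_i$ is self-replicating and on which of $2r_i,\,2r_i+1\pmod 8$ passes into the relevant node of level $i+1$ and onto which of its two sibling edges it falls; this is exactly what R2 and R3 prescribe, and Property~\ref{pr1} guarantees that the triggered links actually exist, so the chain survives to level $n-1$ and lands on the leaf edge of $x$. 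For the ``if'' part of (i) and for (ii): since R2 and R3 only ever manufacture a link at level $\ell$ out of a link at level $\ell-1$ (and R1 supplies the level-$0$ links), a given leaf-level link $E_{n-1.j}\longrightarrow E_{n-1.k}$ unwinds downwards to a chain of links at all levels, which reassembles an RMT sequence and hence a predecessor $x$; Lemma~\ref{lem1} (exactly one link enters each level-$(n-1)$ edge) together with Property~\ref{pr1} makes this downward unwinding well defined. Finally, because R1--R3 only link edges carrying a non-empty label, $E_{n-1.k}$ has a non-empty label, which by a short upward induction forces its whole branch to be non-dotted; hence $x$ is reachable with leaf edge $E_{n-1.k}$.

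Granting the Claim, Theorem~\ref{th1} follows. For the forward direction, if $D=F^t(S)$ with $t\ge 1$ write the orbit $S=S_0,S_1,\ldots,S_t=D$ with $F(S_k)=S_{k+1}$; every $S_k$ with $k\ge 1$ lies in the image of $F$ and is therefore reachable, and $S_0=S$ is reachable because it possesses the leaf edge $s_{n-1}$ named in the statement, so the Claim applied to each pair $(S_{k+1},S_k)$ yields a link from the leaf edge of $S_{k+1}$ to that of $S_k$; writing $(S_k)_{n-1}$ for the leaf edge of $S_k$ and invoking the transitivity of links gives the required path $d_{n-1}=(S_t)_{n-1}\longrightarrow\cdots\longrightarrow(S_0)_{n-1}=s_{n-1}$ (for $t=0$ one has $D=S$ and $d_{n-1}=s_{n-1}$). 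For the converse, a path $d_{n-1}=a_0\longrightarrow a_1\longrightarrow\cdots\longrightarrow a_m=s_{n-1}$ of leaf-level links gives, by part (ii) of the Claim, reachable configurations $Y_0=D,Y_1,\ldots,Y_m=S$ with $a_k$ the leaf edge of $Y_k$; by part (i), $F(Y_{k+1})=Y_k$ for each $k$, so $D=Y_0=F(Y_1)=\cdots=F^m(Y_m)=F^m(S)$ and $D$ is reachable from $S$.

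The crux, and the step I expect to cost the most work, is the level-by-level part of the Claim: proving that R1--R3 reproduce the predecessor's branch \emph{exactly}, neither omitting a genuine reachable predecessor nor producing a spurious target at any level. That forces one to track carefully how the bits encoded in an RMT $r_i$ select which of $2r_i,\,2r_i+1\pmod 8$ passes into the next node and whether the outgoing link is a self-, forward-, or backward link; the three branches of R2--R3, Lemma~\ref{lem1} and Property~\ref{pr1} are precisely the bookkeeping needed, but verifying that together they are consistent and complete is the delicate part. Once the Claim is in place, the induction on $t$ and the appeal to transitivity of links are routine.
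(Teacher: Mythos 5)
Your proposal takes essentially the same route as the paper: the paper's proof likewise rests on the single observation that consecutive edges in a leaf-level path represent consecutive states under $F$ (your single-step Claim), chains these via transitivity of links, and treats the converse direction as immediate. The only difference is one of detail --- you spell out the level-by-level verification of that claim via R1--R3 and Lemma~\ref{lem1}, which the paper simply asserts.
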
 

\begin{proof}
Let us consider, there is a path from $d_{n-1}$ to $s_{n-1}$ at leaf level of length $m$: $E_{n-1.j_1}(r_1)$ $\rightarrow$ $\cdots$ $\rightarrow$ $E_{n-1.j_q}(r_q)$ $\rightarrow$ $\cdots$ $\rightarrow$ $E_{n-1.j_m}$ where $d_{n-1}$ = $E_{n-1.j_1}$ and $s_{n-1}$ = $E_{n-1.j_m}$. Now, we can proof $D$ is reachable from $S$. Hence, we can get a sequence of edges from root to $E_{n-1.k}$ for each $k\in \{j_1, j_2, \cdots , j_m\}$ which represents a reachable state. Here, two reachable states which are represented by edge sequences that end with $E_{n-1.j_p}$ and $E_{n-1.j_{p+1}}$ respectively are two consecutive states. Hence, we can get a sequence of consecutive states. Since there is a path, the sequence of states forms a path involving the RMTs $r_1, r_2,\cdots, r_{m-1}$. Hence, $D$ is reachable from $S$.

Now suppose, $D$ is reachable from $S$. Obviously, there is a path from $d_i$ to $s_i$, $0\leq i\leq n$. Hence the proof.
\end{proof} 
 
\begin{example}
Suppose, $S=0000$ and $D=0101$ for the CA $\langle 9, 170, 195, 80 \rangle$. Now, from Fig.~\ref{Fig_LT}, we see that $d_3$ = $E_{3.5}$ and $s_3$ = $E_{3.0}$. From the linked tree, we can get the path - $E_{3.5}(4) \rightarrow E_{3.10}(0)\rightarrow E_{3.0}$ ($length(d_3,s_3) = 2$). Therefore, $D$ is reachable from $S$, and $D=F^2(S)$ (check it from Fig.~\ref{Fig_STA}).

For the same CA, if $S=0000$ and $D=1101$, then $d_3$ = $E_{3.13}$ and $s_3$ = $E_{3.0}$. From Fig.~\ref{Fig_LT}, we can see that there is no path from $d_3$ to $s_3$. So, $D$ is not reachable from $S$.
\end{example}  

To decide the reachability, we first form the root of the reachability tree (using $\mathcal{R}_0$), get edges from the root, identify links between edges following rule \textbf{R1} of link formation. Then, check if there exist any path from $d_0$ to $s_0$. If it exists then we continue, otherwise conclude that $D$ is not reachable from $S$. If it exists then form the next level (using $\mathcal{R}_1$) and get the links, and again check whether there exists any path from $d_1$ to $s_1$. If no path exists, then $D$ is not reachable from $S$. Otherwise, continue the same process. Finally, if there exist a path from $d_{n-1}$ to $s_{n-1}$, then declare that $D$ is reachable from $S$.

By definition, reachability tree grows exponentially, in general. In this particular problem, however, we do not deal with all the edges. The edges, not in the path of $d_i$ and $s_i$, are irrelevant to us. To reduce the number of edges/nodes in the proposed decision procedure, we remove such irrelevant edges.

\begin{figure*}
	\centering
	\includegraphics[width= 4.2in, height = 2.4in]{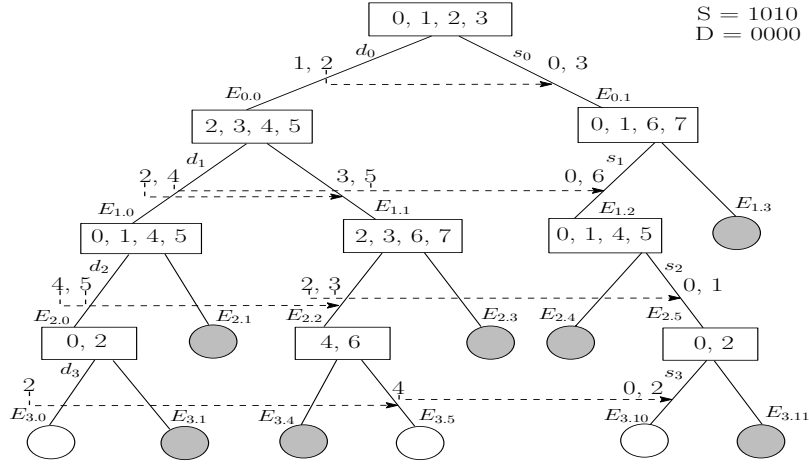}
	\caption{Unnecessary edges for a particular $S$ and $D$ can be removed}
	\label{RT_cut}
\end{figure*}
 
\begin{example}
Let us consider the CA $\langle 9, 170, 195, 80 \rangle$ and $S=1010$ and $D=0000$. Fig.~\ref{RT_cut} explains that $D$ is reachable from $S$. The paths of $d_i$ and $s_i$ are shown in the figure. The edge $E_{1.3}$ is not in the path of $d_1$ and $s_1$, and so it is irrelevant in this particular case. Hence, $E_{1.3}$ is removed, and the corresponding sub tree is not further developed. Similarly, $E_{2.1}$, $E_{2.3}$, $E_{2.4}$, $E_{3.1}$, etc are irrelevant, and hence removed. Obviously, we need not to deal with a good number of edges/nodes here.
\end{example} 

However, we can sometime decide the non-reachability of $D$ from $S$ without tracing path of $d_i$ and $s_i$, but by observing some conditions related to $d_i$ and $s_i$. We next report these conditions.
 
\begin{condition}
\label{con1}
For an $n$-cell CA, if the edge $d_i$ is non-reachable where $0\leq i\leq n-1$, then $S$ to $D$ is not reachable. 
\end{condition}

\begin{reason}
From Theorem~\ref{th1}, we know that, $D$ is reachable from $S$ if there exist a path from $d_{n-1}$ to $s_{n-1}$. And from Property~\ref{pr1}, we can say that the path at leaf level is triggered from the root. That is, if there is a path from $d_{n-1}$ to $s_{n-1}$, then there are the paths from $d_0$ to $s_0$, $d_1$ to $s_1$, $\cdots$ , and $d_{n-2}$ to $s_{n-2}$. If at any level, $d_i$ is non-reachable, then there is no link from this edge. Hence, there is no path from $d_{i+1}$ to $s_{i+1}$, $\cdots$ , and $d_{n-1}$ to $s_{n-1}$. Therefore, there is no path at leaf level and we can conclude that $S$ to $D$ is non-reachable.
\end{reason}

\begin{example}
Suppose, $S=0000$ and $D=1011$ for the CA $\langle 9, 170, 195, 80 \rangle$. Now, from Fig.~\ref{Fig_LT}, we get $d_0$ = $E_{0.1}$ and $s_0$ = $E_{0.0}$, and there is a path $E_{0.1}(0)\rightarrow E_{0.0}$. At the second level, $d_1$ = $E_{1.2}$ and $s_1$ = $E_{1.0}$ and there is also a path: $E_{1.2}(0)\rightarrow E_{1.0}$. Now, at the third level, $d_2$ = $E_{2.5}$ and $s_2$ = $E_{2.0}$ and there is also a path: $E_{2.5}(0)\rightarrow E_{2.0}$. Now at the leaf level, $d_3$ = $E_{3.11}$ and $s_3$ = $E_{3.0}$, but the edge $d_3$ is non-reachable edge. So, there is no path from $d_3$ to anywhere. Therefore, $D$ is not reachable from $S$ (see Fig.~\ref{Fig_STA}).
\end{example}

\begin{condition}
\label{con2}
For an $n$-cell CA, if the edge $s_i$ is self linked for two sibling RMTs and $d_i\neq s_i$, then $S$ to $D$ is not reachable ($0\leq i\leq n-1$).
\end{condition}

\begin{reason}
From Theorem~\ref{th1}, we know that, $D$ is reachable from $S$ if there exists a path from $d_{n-1}$ to $s_{n-1}$, which immediately implies the paths from $d_0$ to $s_0$, $d_1$ to $s_1$, $\cdots$ , and $d_{n-2}$ to $s_{n-2}$. From Lemma~\ref{lem1}, we get that there exist two links to $E_{i.j}$ from any edges (except leaf level). If the edge $s_i$ is self linked for two RMTs, then no other edge can link to $s_i$. So, we can reach to $s_i$ from only the edge $s_i$ and if $d_i\neq s_i$, there is no path from $d_i$ to $s_i$.
\end{reason}
\begin{example}
Consider, $S=1111$ and $D=0000$ of the CA $\langle 9, 170, 195, 80 \rangle$. Now, from Fig.~\ref{Fig_LT}, we get that $d_0$ = $E_{0.0}$ and $s_0$ = $E_{0.1}$ and there is path $E_{0.0}(2)\rightarrow E_{1.0}$. Now, at the next level, $d_1$ = $E_{1.0}$ and $s_1$ = $E_{1.3}$ and there is also have path: $E_{1.0}(4)\rightarrow E_{1.2}(6)\rightarrow E_{1.3}$. At the next level, $d_2$ = $E_{2.0}$ and $s_2$ = $E_{2.7}$ and there is no path from $d_2$ to $s_2$ ($d_i\neq s_i$). The edge $s_2$ is self linked for RMTs 6 and 7 of rule 195. Hence, the edge is not reachable from any other edge.
\end{example} 

\section{Decision Algorithm}
\label{Alg}
Now, we present an algorithm to decide whether $S$ to $D$ is reachable or not. The following algorithm uses the theories framed in the earlier sections, to decide the same. However, the algorithm deals only with the labels of edges. Moreover, the algorithm does not form the whole tree at a time, but it deals with two sets of labels - \{$l_{i.0}, l_{i.1}, \cdots l_{i.2^i-1}$\} and \{$l_{i+1.0}, l_{i+1.1}, \cdots l_{i+1.2^{i+1}-1}$\}. We proceed with only non-empty labels, $ l_{0},~l_{1},\cdots $ and $ l'_{0},~l'_{1},\cdots $. Here, $ l_j$ corresponds to the label of $ E_{i.j} $ and $ l'_{k} $ correspond to the label of $ E_{i+1.k}s$ ($ 0 \le i \le n-1 $). The input of the algorithm is the CA ({\em rule vector}), $S$ ({\em Source}) and $D$ ({\em Destination}). The output is `Yes' if $D$ is reachable from $S$; `No' otherwise.

%
%

\begin{algorithm2e}[h]
\caption{Decide reachability of $D$ from $S$}
\KwIn{\textit{ $\langle \mathcal{R}_0, \mathcal{R}_1, \cdots , \mathcal{R}_{n-1}\rangle$}, $S$ = $(s_i)_{0\leq i\leq n-1}$ and $D$ = $(d_i)_{0\leq i\leq n-1}$}
\KwOut{\textit {Yes/No}}
{\em \textbf{Step 1:}} (a) Put each valid RMT $r$ of ${\mathcal R}_0$ in $l'_0$ (resp. $l'_1$) if ${\mathcal R}_0[r]=0$ (resp. 1), and get links for each RMT. \\
\hspace{3.75em}(b) Set $s\gets s_0$ and $d\gets d_0$\\
{\em \textbf{Step 2:}} Set $Count\gets 1$, $i\gets 0$ and goto Step 6\\
{\em \textbf{Step 3:}} If $i\geq n$, report ``Yes'', and exit.\\
{\em \textbf{Step 4:}} For label $l_k$, $0\leq k \leq Count-1$\\
\hspace{3.80em}Find $l'_{2k}$ and $l'_{2k+1}$ so that, if $r\in l_k$ and $s=2r\pmod 8$ or $2r+1\pmod 8$, then $s\in l'_{2k}$ (resp. $l'_{2k+1}$) when ${\mathcal R}_i[s]=0$ (resp. ${\mathcal R}_i[s]=1$) and get the links for each RMT.\\
{\em \textbf{Step 5:}} Set $s\gets 2*s+s_i$ and $d\gets 2*d+d_i$\\
{\em \textbf{Step 6:}} Verify the following.\\
			\hspace{3.5em} (a) If $l'_d=\emptyset$, report ``No'' and exit (Condition~\ref{con1}).\\
			\hspace{3.5em} (b) If $l'_s$ is self linked for two RMTs, report ``No'' and exit (Condition~\ref{con2}).\\
{\em \textbf{Step 7:}} Search for paths from $l'_d$ to $l'_s$. If no path exists, report ``No'' and exit.\\
{\em \textbf{Step 8:}} (a) Mark the labels ($l'_j$s) which are not in any path, computed in Step 7, as {\em irrelevant}.\\
\hspace{3.75em}(b) $Count\gets 2*Count$ $-$ $\#$ $irrelevant$ $labels$.\\
{\em \textbf{Step 9:}} (a) Assign the elements of $l'$ to $l$ without irrelevant labels, and accordingly update $s$ and $d$.\\
\hspace{3.75em}(b) Set $i\gets i+1$, and goto Step 3.\\
\label{alg1}
\end{algorithm2e}

\begin{example}
Let us consider the CA $\langle$9, 170, 195, 80$\rangle$, $S=1010$ and $D=0000$ (Fig.\ref{RT_cut}) as input to Algorithm~\ref{alg1}. Here $l'_0=\{1,2\}$, $l'_1=\{0,3\}$, $s=1$ and $d=0$. A path from $l'_0$ to $l'_1$ exists (Step 7). Since there is no irrelevant label, so $Count=2$. Next, we get 4 labels (Fig.~\ref{Fig_LT}) $l'_0=\{2,4\}$, $l'_1=\{3,5\}$, $l'_2=\{0,6\}$ and $l'_3=\{1,7\}$ (Step 4). Now, $s=2$ and $d=0$. The conditions of Step 6 are not satisfied, so the algorithm searches for a path from $l'_0$ to $l'_2$. There exists a path involving $l'_0$, $l'_1$ and $l'_2$ (see Fig.\ref{RT_cut}). Obviously $l'_3$ is irrelevant in this case. Hence, $Count=3$ (Step 8(b)). Now, we assign the following: $l_0\gets l'_0$, $l_1\gets l'_1$, $l_2\gets l'_2$, and further we update $s=2$ and $d=0$ (Step 9(a)). As a next step, the algorithm finds $l'_0$, $l'_1$, $\cdots$ , $l'_5$ (Step 4) and sets $s=5$ and $d=0$. There exists a path involving $l'_0$, $l'_2$ and $l'_5$. So, $l'_1$, $l'_3$ and $l'_4$ are irrelevant in this case. Hence, $Count=3$ (Step 8(b)). Now, we assign following: $l_0\gets l'_0$, $l_1\gets l'_2$, $l_2\gets l'_5$, and further we update $s=2$ and $d=0$ (Step 9(a)). In this way, the algorithm proceeds, and finally reports ``Yes''.
\end{example}

\noindent \textbf{Correctness of Algorithm~\ref{alg1}:}  The correctness of the algorithm is directly connected to the theorems, lemmas and conditions reported before. The algorithm conceptually forms reachability tree for the given CA and finds the links at each level. From the root to leaf, at any level, if the destination edge is non-reachable or the source edge is self linked for two sibling RMTs, then according to Condition~\ref{con1} or Condition~\ref{con2}, the algorithm terminates with output {\em Non-reachable}. At any level, if there does not exist any path, then according to Property~\ref{pr1} and Theorem~\ref{th1}, the algorithm terminates with output {\em Non-reachable}. Otherwise, it forms a new level and checks the paths. At leaf level, if there exists any path, then according to Theorem~\ref{th1}, the algorithm terminates with output {\em Reachable}. 

\begin{theorem}
\label{th_alg}
The upper bound running time of Algorithm~\ref{alg1} is proportional to the number of edges explored by the algorithm. 
\end{theorem}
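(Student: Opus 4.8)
The plan is to show that Algorithm~\ref{alg1} spends only $O(1)$ amortised work per edge it creates, so that its running time is dominated by the total number of edges explored. Write $E$ for this number; since the algorithm proceeds level by level, creating the (at most) $2\,Count_i$ children of the $Count_i$ surviving labels of level $i$, we have $E=\Theta\bigl(\sum_i Count_i\bigr)$, where $Count_i$ is the value of the variable $Count$ when level $i$ starts being processed. I would split the statements of the algorithm into a \emph{per-level} group and a \emph{per-label} group. The per-level group --- the initialisations of Steps~1--2 (done once in total), the termination test of Step~3, the index updates of Step~5, and the part of Step~6 that consults only the two designated labels $l'_d$ and $l'_s$ --- costs $O(1)$ per level, hence $O(n)$ altogether. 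Since the algorithm retains at least one surviving label on every level it processes (else it would have exited at Step~6(a) by Condition~\ref{con1}), we get $n=O(E)$; and if the algorithm exits early at some level $i<n$ the same estimate $O(i)=O(E)$ bounds the work done. So the per-level group contributes $O(E)$.

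For the per-label group --- Steps~4, 7, 8, and~9 --- the key structural fact is that every label is a subset of some $Z_8^{i}$ and therefore contains at most eight RMTs. Consequently, per label, Step~4 creates its two children and, via rules \textbf{R1}--\textbf{R3}, at most eight links, each by an $O(1)$ lookup; Step~8 inspects it once; Step~9 copies it once. Summing over the at most $2\,Count_i$ labels of level $i$ and then over all levels, Steps~4, 8, and~9 together cost $O\!\left(\sum_i Count_i\right)=O(E)$.

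The main obstacle is Step~7 (with the irrelevance marking of Step~8), which must decide whether a $d_i$-to-$s_i$ path exists and flag every label lying on no such path. I would regard the links created at a fixed level $i$ as a directed graph on the non-empty labels of that level: by $|l'_j|\le 8$ each vertex has out-degree at most eight, and by Lemma~\ref{lem1} each has in-degree at most two. One forward traversal from $d_i$ and one backward traversal from $s_i$ then suffice: each visits every label at most once and does $O(1)$ work there, because following a chain of links is simply a directed walk in this graph (transitivity of links adds nothing to traverse) and the index-monotonicity of links --- forward links strictly increase, backward links strictly decrease the label index --- keeps the walk simple. Hence Step~7 costs $O(Count_i)$ per level, and being run once per level it totals $O(E)$. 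Adding the two groups gives an overall running time $O(E)$, proportional to the number of edges explored, as claimed. The point that must be handled with care in the full proof is exactly this degree bookkeeping for the link graph --- bounded out-degree from $|l'_j|\le 8$, bounded in-degree from Lemma~\ref{lem1} --- which is what makes a constant number of linear-time traversals per level enough.
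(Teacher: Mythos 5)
Your proof is correct and follows the same basic strategy as the paper's: charge the work of each iteration of the main loop to the labels (edges) explored at that level, so that the total running time is proportional to $\sum_i Count_i$, i.e., to the number of edges explored. The paper's own proof, however, stops at the assertion that ``Steps 5--9 work on those labels'' and is therefore proportional to their number; you go further and actually justify the only step for which this is not immediate, namely the path search of Step~7, by bounding the out-degree of each label in the link graph via $|l'_j|\le 8$ and its in-degree via Lemma~\ref{lem1}, so that a forward traversal from $d_i$ and a backward traversal from $s_i$ run in $O(Count_i)$ time. This degree bookkeeping is a genuine addition --- without it the claim that Step~7 costs only $O(Count_i)$ per level is unsupported in the paper --- so your write-up is a strictly more complete version of the same argument rather than a different one.
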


\begin{proof}
Algorithm~\ref{alg1} contains main loop enclosing Steps 4-9. Hence, the time complexity of the algorithm is dependent on the time requirements of the steps. However, Step 4 finds the labels of edges of a level, and Steps 5-9 work on those labels. That is, if $k$ number of labels, hence edges, are explored at Step 4, then the other labels work only with them. Therefore, the upper bound of the time requirement for single execution of Steps 4 to 9 is proportional to $k$. Now, before halting of the algorithm, it repeatedly explores the edges in each run of the main loop. Hence, upper bound of the running time is proportional to the total number of edges explored by the algorithm.
\end{proof}

\noindent{\textbf{Worst case analysis:}} The worst case in Algorithm~\ref{alg1} occurs if $D$ is reachable from $S$ and no labels (hence, edges) can be removed. That is, the reachability tree contains all the possible leaves. In that case, space requirement, which is determined by two arrays - $l_i$ and $l'_{i}$, is exponential. The time requirement is then obviously exponential.

However, the algorithm performs well on an average. Because, in many cases, many edges are removed, and before reaching to the leaf of the tree, non-reachability can be decided. A sample result of another experimentation is shown Table~\ref{tableexp}, which speaks about the fact that in many cases, we need not to deal with all the of a CA. The first rule vector of Table~\ref{tableexp} says that if $S=10(0+1)^{n-2}$ and $D=11(0+1)^{n-2}$, and if first two rules of the CA are 8 and 58, then $D$ is not reachable from $S$ for any value of $n\geq 2$. Table~\ref{tableexp} gives us an idea that reachability can be decided much before than encountering the last rule. To understand the average performance of the algorithm, we have arranged a detailed experimental study which is reported in the next section.

%
\begin{table}[h]
\setlength{\tabcolsep}{0.2pt}
	\begin{center}
	\caption{An experimental study}	
		\label{tableexp}
		
		\begin{tabular}{|c|c|c|c|c|c|c|}\hline
		 Rule Vector & CA Size & Source & Destination & Reachable & Decision & Remarks \\ 
		 & & $S$ & $D$ &  or Not &  Level & \\ \hline
		 $\langle$8, 58, $\mathcal{R}_2$, $\mathcal{R}_3$, & $n$ $\geq$ $2$ & $10$ & $11$ & Not & 1  & satisfies \\ 
		 $\cdots$ , $\mathcal{R}_{n-1}$$\rangle$ &  & $(0+1)^{n-2}$ & $(0+1)^{n-2}$ & reachable &   & Condition~\ref{con1}\\ \hline
		 
		 $\langle$10, 164, $\mathcal{R}_2$, $\mathcal{R}_3$, & $n$ $\geq$ $2$ & $00$ & $10$ & Not & 1  & satisfies \\ 
		 $\cdots$ , $\mathcal{R}_{n-1}$$\rangle$ &  & $(0+1)^{n-2}$ & $(0+1)^{n-2}$ & reachable &  & Condition~\ref{con2} \\ \hline
		 
		 
		 
		 $\langle$7, 72, 254, $\mathcal{R}_3$, & $n$ $\geq$ $3$ & $111$ & $011$ & Not & 2  & no path \\
		 $\mathcal{R}_4$, $\cdots$ , $\mathcal{R}_{n-1}$$\rangle$ &  & $(0+1)^{n-3}$ & $(0+1)^{n-3}$ & reachable &   & exists \\ \hline
		 
		 $\langle$15, 213, 5, 196, 124,& $n$ = $10$ & 01001 & 11001 & Reachable & 9  & satisfies \\ 
		 243, 218, 99, 184, $85$$\rangle$ & & 00101 & 11011 &  &  & Theorem~\ref{th1} \\ \hline
\end{tabular}
\end{center}
\end{table}

\section{Average Case Analysis}
\label{Ex-study}
We find the upper bound of average running time of Algorithm~\ref{alg1} experimentally. Theorem~\ref{th_alg} points out the fact that the running time of the algorithm is proportional to the number of edges explored in corresponding reachability tree. By the proposed experimentation, we, therefore, find the average number of edges explored by Algorithm~\ref{alg1} for a given CA size. We next proceed with experimental setup.

\subsection{Experimental Setup}
\label{Ex-setup}


In this experiment, we use simple random sampling {\em with replacement} to calculate the population mean ($\mu$)~\cite{willam,SETHI2016}. In the estimation process, ${\overline{X_k}}$ denotes the mean of $k^{th}$ sample, and ${\widehat{\overline{X_k}}}$ denotes the $k^{th}$ estimate to the population mean ($k\ge 1$). Let us consider that the sample size is $m$. So, ${\overline{X_k}}=\frac{1}{m}\sum_{i=1}^{m}{x_i}$, where $x_i$ is an element of the population which is chosen randomly and uniformly.

In the experiment, we first find ${\overline{X_1}}$ which is considered as the first estimate ${\widehat{\overline{X_1}}}$ to population mean ($\mu$). Next we take the second sample of size $m$, and find ${\overline{X_2}}$. Then, we find the next estimate ${\widehat{\overline{X_2}}}$ to $\mu$ in the following way. And, this process continues.

\begin{gather*}
\widehat{\overline{X_1}}={\overline{X_1}} \\  
\widehat{\overline{X_2}}=\frac{{\overline{X_1}}+{\overline{X_2}}}{2}=\frac{1}{2}\widehat{\overline{X_1}}+\frac{1}{2}{\overline{X_2}}\\
\widehat{\overline{X_3}}=\frac{{\overline{X_1}}+{\overline{X_2}}+{\overline{X_3}}}{3}=\frac{2}{3}\frac{({\overline{X_1}}+{\overline{X_2}})}{2}+\frac{1}{3}{\overline{X_3}}=\frac{2}{3} \widehat{\overline{X_2}}+\frac{1}{3}{\overline{X_3}}\\
\dots\\
\widehat{\overline{X_k}}=\frac{k-1}{k}\widehat{\overline{X_{k-1}}}+\frac{1}{k}{\overline{X_k}}
\label{e5}
\end{gather*}

As the mean of all possible samples' means is the population mean, the series $({\widehat{\overline{X_k}}})_{k\in {\mathbb{N}}}$ approaches to $\mu$. For our study, population size is normally large. So, neither consideration of all possible samples nor finding of $\mu$ is possible. We, therefore, declare ${\widehat{\overline{X_k}}}$ as our final estimate to the population mean if $\frac {|\widehat{\overline{X_k}}-\widehat{\overline{X_{k-1}}}|}{\widehat{\overline{X_k}}} < \delta$, where $\delta$ is a small threshold value and specifies the precision we desire to achieve. We consider here $\delta = 0.01$.

Now, fixing of the `$m$' value is another important task of this calculation. Here, we use another statistical method for choosing $m$. For calculating the sample size ($m$), we first take a random sample of size $n_1$. Then, we find another sample size $n_2$ using the following equation~\cite{willam}.


\begin{equation}
\label{e1}
n_2=\frac{S_1^2}{C{\mu}_1^2}(1+8C+\frac{S_1^2}{n_1{\mu}_1^2}+\frac{2}{n_1})
\end{equation}
where $\mu_1$ and $S_1^2$ are the mean and variance of the first sample of size $n_1$, and 
\begin{equation}
\label{e2}
C=\frac{r^2}{t^2}
\end{equation}
where $t$ is the constant and $r$ is the relative error. For our experimental setup, we consider $t=2$ and $r=0.05$~\cite{willam}.

As a next step, we randomly and uniformly take the second sample of size $n_2$. Then, we find $\mu_2$ and $S^2_2$ as the mean and variance of the second sample. Using these parameters, we find another sample size $m_0$, which finally leads us to get the `m':

\begin{equation}
\label{e3}
m_0=\frac{t^2S_2^2}{r^2\mu_2^2}
\end{equation}

\noindent Now, the desired sample size is calculated as following, where $N$ is the population size.

\begin{equation}
\label{e4}
m=\frac{m_0}{1+\frac{m_0}{N}}
\end{equation}


\subsection{The Method of Experiment}
\label{method}
Though Algorithm~\ref{alg1} is a decision algorithm, a slight modification in Algorithm~\ref{alg1} enables us to get the total number of edges explored by it. To do that, we initialize a variable $Total\_count$ ($Total\_count \gets 0$) in the Step 2 of Algorithm~\ref{alg1}, and rewrite the Step 8 as following:

\fbox{\begin{minipage}{40em}
{\em \textbf{Step 8:}}\\ (a) Mark the labels ($l'_j$s) which are not in any path, computed in Step 7, as {\em irrelevant}.\\
(b) $Total\_count$ $\gets$ $Total\_count$ $+$ 2*Count.\\
(c) $Count\gets 2*Count$ $-$ $\#$ $irrelevant$ $labels$.
\end{minipage}}

So, we just add an extra step (Step 8(b)) in Algorithm 1 to get the number ($Total\_count$) of explored edges. We use this modified algorithm in our experimentation. However, Algorithm~\ref{alg1} demands two input parameters - one is a CA (that is, a rule vector) and the other is a pair of states (source and destination). For the experiment, therefore, we need to find out sample size twice. One for the pairs of states when a CA is given, and the other for the CAs of a given size. Let us consider that $m''$ be the number of CAs to be sampled for a given size, and $m'$ be the number of pairs to be sampled for a given CA.


\begin{example}
This example illustrates, the calculation of $m'$. Let us consider the 20-cell CA $\langle$106, 110, 191, 148, 71, 118, 189, 147, 164, 141, 90, 183, 201, 73, 106, 103, 230, 207, 73, 36$\rangle$. To find $m'$, we first randomly choose 500 ($=n_1$) pairs of source and destination states. By Algorithm~\ref{alg1}, we can calculate $\mu_1=359$ (mean of explored edges), $S^2_1=11025$. Using the values of mean and variance, we find $n_2=138$ (using Equation~\ref{e1}). For the sample size $n_2$, we get the $\mu_2=352$ and $S^2_2=9978$. Now using Equation~\ref{e3}, we get the value of $m_0=129$. Finally, we get the sample size $m'$, which is also 129 (using Equation~\ref{e4}) where $N=2^{20}$. 
\end{example}

Now, using the value of $m'$ and $m''$, we can get the average number of explored edges. For a given CA size, we randomly and uniformly synthesize $m''$ number of (non-uniform) CAs, and for each CA we randomly and uniformly choose $m'$ number of source and destination pairs. However, for each case, we use the modified Algorithm~\ref{alg1} to get total number of edges explored to decide the reachability. For ease of reference, the method is summarized in Algorithm~\ref{alg2}. This method takes the CA size as input, and reports the average number of explored edges. We use this algorithm to get our further results.


\begin{algorithm2e}[h]
\caption{Average Number of Explored Edges}
\KwIn{\textit{CA Size (n)}}
\KwOut{\textit {Average number of edges explored}}
{\em \textbf{Step 1:}} Set $pre\_avg\_edges$ $\gets$ 0 and $k \gets 1$\\
{\em \textbf{Step 2:}} Set $\#explored\_edges$ $\gets$ 0 and $i \gets 0$\\
{\em \textbf{Step 3:}} If $i\geq m''$, then goto Step 9. \\
{\em \textbf{Step 4:}} (a) Choose an $n$-cell CA randomly \\
 \hspace{3.5em} (b) Set $j \gets 0$\\
{\em \textbf{Step 5:}} If $j\geq m'$, then set $i \gets i+1$ and goto Step 3.\\
{\em \textbf{Step 6:}} Synthesize $n$-bit Source ($S$) and Destination ($D$) states randomly\\
{\em \textbf{Step 7:}} (a) With the help of the modified Algorithm~\ref{alg1}, get the total number\\
 \hspace{3.5em}  of explored edges for this CA and ($S, D$). Suppose, the number is\\
 \hspace{3.5em}  $Total\_count$.\\
 \hspace{3.5em} (b) Set $\#explored\_edges$ $\gets$ $\#explored\_edges$ $+$ $Total\_count$\\
{\em \textbf{Step 8:}} Set $j \gets j+1$ and goto Step 5.\\
{\em \textbf{Step 9:}} Set $avg\_edges$ $\gets$ $\frac{k-1}{k}$ $pre\_avg\_edges$ $+$ $\frac{1}{k}\frac{\#explored\_edges}{m'*m''}$ \\
{\em \textbf{Step 10:}} If $\frac {(avg\_edges) - (pre\_avg\_edges)}{avg\_edges} < \delta$, then\\
 \hspace{3.5em} report $\textbf{avg\_edges}$ and $\textbf{exit}$.\\
{\em \textbf{Step 11:}} Set $pre\_avg\_edges$ $\gets$ $avg\_edges$, $k \gets k + 1$ and goto Step 2.

\label{alg2}
\end{algorithm2e}


\subsection{The Results}
\label{result}
Using Algorithm~\ref{alg2}, we have extensively experimented with various CA sizes to get the average number of explored edges against a CA size. In Table~\ref{avg-table}, we report a sample experiment to show the average number of explored edges with respect to the size of automaton. The table points out the fact that with increase of CA size, explored number of edges also increases, but it is not exponential. 

\begin{table}[h]
	\setlength{\tabcolsep}{4.5pt}
	\begin{center}
		\caption{Experimental results}	
		\label{avg-table}
		\begin{tabular}{|c||c|c|c|c|c|c|c|c|c|c|}\hline
 		CA size & 10 & 20 & 30 & 40 & 50 & 60 & 70 & 80 & 90 & 100\\\hline
 		
 		Average number of &  &  &  &  &  &  &  &  &  & \\
 		edges to be & 50 & 344 & 1085 & 2428 & 4536 & 7612 & 11704 & 17012 & 23742 & 31923 \\
 		explore &  &  &  &  &  &  &  &  &  & \\\hline
 		
		
		\end{tabular}
	\end{center}
\end{table}
Therefore, we need another experiment for finding the {\em rate of growth} with respect to CA size. However, the worst case time complexity is exponential for this problem. Therefore, we can compare the average number of explored edges (experimentally) with the worst case of reachability problem for different size of automaton. In Figure~\ref{graph1}, we plot the logarithm of number of edges explored against the CA size. The worst case scenario is shown by the dotted line and experimental result is shown by continuous curve in the figure. It is obvious from the graph that the edges explored on average is much less than that on worst case. 

\begin{figure*}[h]
	\centering
	\includegraphics[width= 11.2in, height = 2.9in]{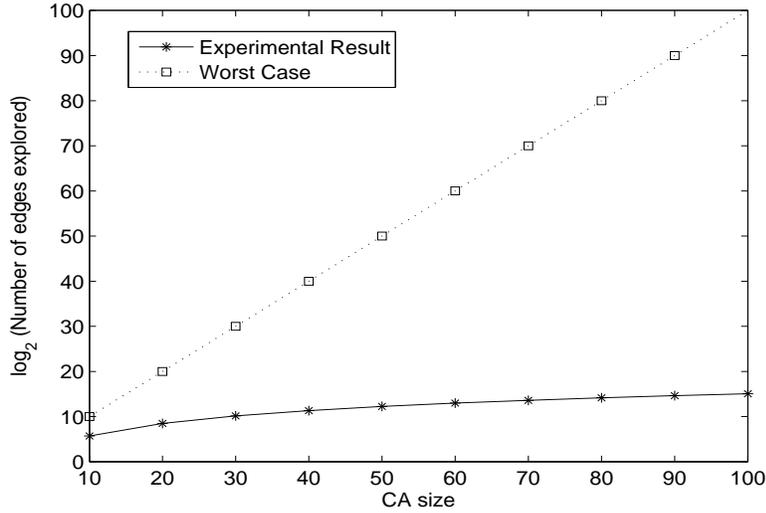}
	\caption{Average number of edges explored (experimentally) and worst case} 
	\label{graph1}
\end{figure*}

\subsection{The Rate of Growth}
\label{avg-per}
Experimental results indicate that the rate of growth of average number of explored edges is not exponential. In this sub-section, we find the rate of growth of explored edges to mathematically feel the change in explored edges with respect to the size of automaton. To find the rate of growth, we use the {\em empirical curve bounding technique}~\cite{Catherine}. Assuming the explored edges ($e$) follows power rule, that is, $e\approx kn^a$~\cite{Catherine}, the coefficient `$a$' can be found by taking empirical measurements of explored edges $\{e_1, e_2\}$ at some input CA size $\{n_1, n_2\}$, and calculating $\frac{e_2}{e_1}\approx(\frac{n_2}{n_1})^a$. So,
\begin{equation}
\label{rate}
  a\approx \frac{\log (e_2/e_1)}{\log(n_2/n_1)}
\end{equation}

Now, after taking the value of `$e$' for different size of automaton, we can find rate of growth using the Equation~\ref{rate}. In Table~\ref{table-rate}, we are showing the rate of growth with respect to CA size. 

\begin{table}[h]
	\setlength{\tabcolsep}{4.5pt}
	\begin{center}	
		\caption{Rate of growth with respect to CA size}
		\label{table-rate}
		\begin{tabular}{|c||c|c|c|c|c|c|c|c|c|c|}\hline
 		CA size & 10 & 20 & 30 & 40 & 50 & 60 & 70 & 80 & 90 & 100\\\hline
 		
 		Explored edges & 50 & 344 & 1085 & 2428 & 4536 & 7612 & 11704 & 17012 & 23742 & 31923 \\\hline
 		
 		Rate of growth ($a$) & - & 2.78 & 2.83 & 2.80 & 2.80 & 2.84 & 2.79 & 2.80 & 2.83 & 2.81\\\hline
		
		\end{tabular}
	\end{center}
\end{table}

From the experimentation, we have also observed that the growth rate of explored edges always lies under some upper bound. To represent this fact asymptotically, we are using the big-oh ($O$) notation. From the definition of big-oh ($O$) notation, we can get that for a given function $g(n)$, $T(n)=O(g(n))$, if there exist two positive constant $c$ and $n_0$, such that $0\leq T(n)\leq cg(n)$, for all $n\geq n_0$~\cite{Cormen}. As the average number of edges, explored of the non-uniform CAs satisfies the definition of big-oh, so we represent the rate of growth by big-oh notation. From Table~\ref{table-rate}, we can show that, the value of `$a$' is nearly $3$ for all value of $n$ ($n$ is the size of automaton). So, we estimate $g(n)=n^3$. Hence, we can say, the average number of edges to be explore of these CAs as $O(n^3)$. This is validated in Fig.~\ref{graph2}.

\begin{figure*}[h]
	\centering
	\includegraphics[width= 8.2in, height = 2.9in]{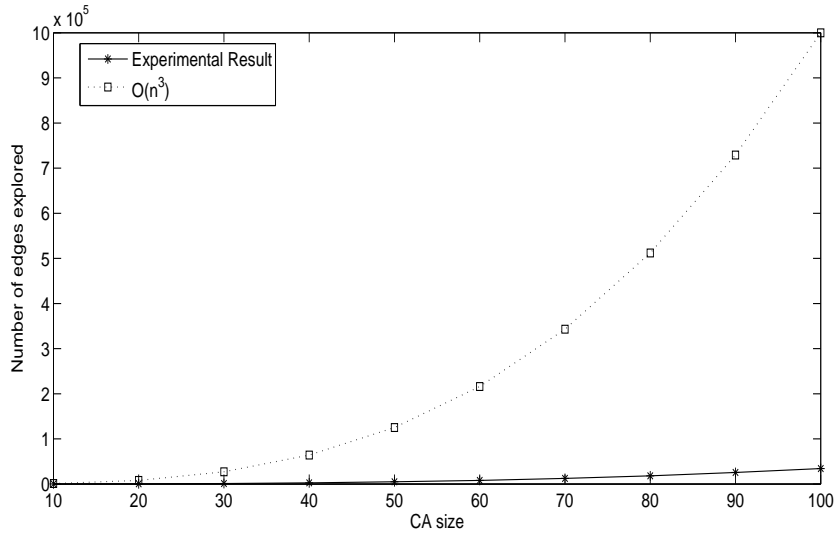}
	\caption{Upper bound of average number of edges explored}
	\label{graph2}
\end{figure*}

%
%

\section{Conclusion}
This paper has presented an in-depth analysis on the non-uniform CAs for reachability problem. The reachability tree has been utilized to develop theories for this class of CAs. We have introduced here a technique to trace the state transition diagram in reachability tree. This technique has helped us to design the decision algorithm for the reachability problem. The average case analysis of our algorithm is done experimentally. The average case performance is $O(n^3)$ of our algorithm, where the worst case time complexity is exponential.


\bibliographystyle{fundam}
\bibliography{References}

\begin{thebibliography}{10}

\bibitem{JCA16}
Adak, S., Naskar, N., Maji, P., Das, S.: {O}n {S}ynthesis of {N}on-uniform
  {C}ellular {A}utomata {H}aving {O}nly {P}oint {A}ttractors,
\newblock \emph{Journal of Cellular Automata, Special issue on Cellular
  Automata in Theoretical Computer Science}, \textbf{12}(1-2), 2016, 81--100.

\bibitem{ppc1}
Chaudhuri, P.~P., Chowdhury, D.~R., Nandi, S., Chatterjee, S.: \emph{Additive
  Cellular Automata -- Theory and Applications}, vol.~1,
\newblock IEEE Computer Society Press, USA, ISBN 0-8186-7717-1, 1997.

\bibitem{rpa}
Clementi, A. E.~F., Impagliazzo, R.: The Reachability Problem for Finite
  Cellular Automata.,
\newblock \emph{Inf. Process. Lett.}, \textbf{53}(1), 1995, 27--31.

\bibitem{willam}
Cochran, W.~G.: \emph{Sampling Techniques},
\newblock John Wiley, 1977,
\newblock ISBN 0-471-16240-X.

\bibitem{Cormen}
Cormen, T.~H., Stein, C., Rivest, R.~L., Leiserson, C.~E.: \emph{Introduction
  to Algorithms},
\newblock 2nd edition, McGraw-Hill Higher Education, 2001,
\newblock ISBN 0070131511.

\bibitem{SukantaTH}
Das, S.: \emph{{T}heory and {A}pplications of {N}onlinear {C}ellular {A}utomata
  {I}n {VLSI} {D}esign},
\newblock Ph.D. Thesis, {B}engal {E}ngineering and {S}cience {U}niversity,
  {S}hibpur, {I}ndia, 2007.

\bibitem{Acri04}
Das, S., Sikdar, B.~K., Chaudhuri, P.~P.: {C}haracterization of
  {R}eachable/{N}onreachable {C}ellular {A}utomata {S}tates,
\newblock \emph{{P}roceedings of $6^{th}$ {I}nternational {C}onference on
  {C}ellular {A}utomata for {R}esearch and {I}ndustry ({ACRI})}, October 2004.

\bibitem{Dennunzio}
Dennunzio, A., Formenti, E., Provillard, J.: Computational Complexity of Rule
  Distributions of Non-uniform Cellular Automata,
\newblock \emph{Proceedings of the 6th International Conference on Language and
  Automata Theory and Applications}, LATA'12, 2012,
\newblock ISBN 978-3-642-28331-4.

\bibitem{martin84}
Martin, O., Odlyzko, A.~M., Wolfram, S.: Algebraic properties of cellular
  automata,
\newblock \emph{Communications in Mathematical Physics}, \textbf{93}(2), 1984,
  219--258.

\bibitem{Catherine}
McGeoch, C.~C., Sanders, P., Fleischer, R., Cohen, P.~R., Precup, D.: Using
  Finite Experiments to Study Asymptotic Performance,
\newblock \emph{Experimental Algorithmics, From Algorithm Design to Robust and
  Efficient Software [Dagstuhl seminar, September 2000]}, 2000.

\bibitem{Neuma66}
von Neumann, J.: \emph{The theory of self-reproducing Automata, {A}. {W}.
  {B}urks ed.},
\newblock Univ. of {I}llinois {P}ress, {U}rbana and {L}ondon, 1966.

\bibitem{SETHI2016}
Sethi, B., Roy, S., Das, S.: Asynchronous cellular automata and pattern
  classification,
\newblock \emph{Complexity}, \textbf{21}(S1), 2016, 370--386.

\bibitem{sipper96evnca}
Sipper, M.: Co-evolving Non-Uniform Cellular Automata to Perform Computations,
\newblock \emph{Physica D}, \textbf{92}, 1996, 193--208.

\bibitem{rpa1}
Sutner, K.: On the Computational Complexity of Finite Cellular Automata,
\newblock \emph{J. Comput. Syst. Sci.}, \textbf{50}(1), 1995, 87.

\bibitem{Wolfr83}
Wolfram, S.: Statistical Mechanics of Cellular Automata,
\newblock \emph{Reviews of Modern Physics}, \textbf{55}(3), 1983, 601--644.

\bibitem{wolfram84b}
Wolfram, S.: Universality and complexity in cellular automata,
\newblock \emph{Physica D}, \textbf{10}, 1984, 1--35.

\end{thebibliography}

\end{document}